\newtheorem{theorem}{Theorem}
\newtheorem{lemma}{Lemma}
\newtheorem{claim}{Claim}
\newtheorem{definition}{Definition}
\newlist{pseudocode}{enumerate}{3}
\setlist[pseudocode]{
  label={\arabic*},
  before=\raggedright,
  leftmargin=*,
  itemsep=3pt,
  topsep=3pt}
\setlist[pseudocode,2]{label*={.\arabic*}}
\setlist[pseudocode,3]{label*={.\arabic*}}
\newlist{romanlist}{enumerate}{1}
\setlist[romanlist]{
  label={\roman*)},
  leftmargin=*,
  itemsep=3pt,
  topsep=3pt}
\newcommand{\ordo}{O}
\newcommand{\lcm}{\mathrm{lcm}}
\title{On completely factoring any integer efficiently in a single run of an order finding algorithm}
\author[1,2]{\href{mailto:ekera@kth.se}{Martin Ekerå}}
\affil[1]{\small KTH Royal Institute of Technology, Stockholm, Sweden}
\affil[2]{\small Swedish NCSA, Swedish Armed Forces, Stockholm, Sweden}
\begin{document}
\maketitle

\begin{abstract}
  We show that given the order of a single element selected uniformly at random from $\mathbb Z_N^*$, we can with very high probability, and for any integer $N$, efficiently find the complete factorization of $N$ in polynomial time.
  This implies that a single run of the quantum part of Shor's factoring algorithm is usually sufficient.
  All prime factors of $N$ can then be recovered with negligible computational cost in a classical post-processing step.
  The classical algorithm required for this step is essentially due to Miller.
\end{abstract}

\section{Introduction}
In what follows, let
\begin{align*}
N = \prod_{i \, = \, 1}^{n} p_i^{e_i}
\end{align*}
be an $m$ bit integer, with $n \ge 2$ distinct prime factors $p_i$, for $e_i$ some positive exponents.

Let an algorithm be said to \emph{factor} $N$ if it computes a non-trivial factor of $N$, and to \emph{completely} factor $N$ if it computes the set $\{ p_1, \ldots, p_n \}$.
Let~$\phi$ be Euler's totient function, $\lambda$ be the Carmichael function, and $\lambda'(N) = \lcm(p_1 - 1, \ldots, p_n - 1)$.
Furthermore, let~$\mathbb Z_N^*$ denote the multiplicative group of~$\mathbb Z_N$, the ring of integers modulo~$N$, and let $\ln$ and $\log$ be the natural and base two logarithms, respectively.
Denote by~$[a, b]$ the integers from~$a$ up to an including~$b$.

Throughout this paper, we shall assume $N$ to be odd for proof-technical reasons.
This does not imply a loss of generality:
It is easy to fulfill this requirement by using trial division.
Indeed, one would in general always remove small prime factors before calling upon more elaborate factoring algorithms.
Note furthermore that order finding may be performed prior to trial division being applied to $N$ if desired, see section~\ref{section:multiple-N} for further details.

There exists efficient probabilistic primality tests, such as Miller-Rabin~\cite{miller, rabin}, and efficient algorithms for reducing perfect powers $z = q^e$ to $q$: A simple option is to test if $z^{1/d}$ is an integer for some $d \in [2, \lfloor \log z \rfloor]$.
For more advanced options, see e.g.~Bernstein et al.~\cite{bernstein}.

\section{Earlier works}
Shor~\cite{shor1994, shor1997} proposed to factor $N$ by repeatedly selecting a random $g \in \mathbb Z_N^*$, computing its order $r$ via quantum order finding, and executing a classical procedure inspired by Miller~\cite{miller}.
Specifically, Shor proposed to use that if $r$ is even, and $g^{r/2} \not\equiv -1 \:\: (\text{mod } N)$, it must be that
\begin{align*}
(g^{r/2} - 1)(g^{r/2} + 1) = g^r - 1 \equiv 0 \quad (\text{mod } N)
\end{align*}
so $\gcd((g^{r/2} \pm 1) \text{ mod } N, N)$ yields non-trivial factors of $N$.
Note that $g^{r/2} \not\equiv 1 \:\: (\text{mod } N)$ by definition, as $r$ is otherwise not the order of $g$.
Shor proved that the probability of the above two requirements being met is at least $1/2$.
If both requirements are not met, the algorithm may be re-run for a new $g$, in which case the probability is again at least $1/2$ of succeeding.

This implies that Shor's algorithm will eventually succeed in finding two non-trivial factors of $N$.
However, as re-running the quantum order finding part of the algorithm is expensive, it is natural to consider improved strategies.
To completely factor $N$, recursive calls to Shor's factoring algorithm, and hence to the quantum order finding part, would naïvely be required, albeit with consecutively smaller factors, until the factors are prime, perfect powers, or sufficiently small to factor using classical algorithms.
Again, it is natural to consider improved strategies to avoid re-runs in this setting.

\subsection{On the success probability of quantum order finding}
Shor's factoring algorithm as originally described can fail either because the order~$r$ of~$g$ is not amenable to factoring~$N$, in the sense that $r$ is odd or $g^{r/2} \equiv -1 \:\: (\text{mod } N)$, or because the order finding part of the algorithm fails to return~$r$ given~$g$.

The probability of the algorithm failing for the latter reason is negligible, however, if the quantum part is correctly parameterized and post-processed, and if it is executed as per its mathematical description by the quantum computer, see e.g.~Appendix A to~\cite{ekera-general} or~\cite{bourdon}.

In what follows, we therefore primarily focus our attention on classically recovering non-trivial factors of $N$ given $r$.
When referring to factoring in a single run of an order finding algorithm, we assume the order finding algorithm to yield $r$ given $g$.

\subsection{Tradeoffs in quantum order finding}
Seifert~\cite{seifert} has proposed to modify the order finding part of Shor's algorithm to enable tradeoffs between the number of runs that need be performed on the quantum computer, and the complexity of each run.
In essence, Seifert's idea is to compute only partial information on the order in each run, thereby reducing the number of operations that need to be performed by the computer in each run without loss of coherence.
Given the outputs from a sufficiently large number of such partial runs, the order may then be re-constructed efficiently classically, yielding a complete order finding algorithm that returns $r$ with high probability given $g$.

Making tradeoffs may prove advantageous in the early days of quantum computing when the capabilities of the computers available are limited.
On a side note, Knill~\cite{knill} has proposed a different kind of tradeoffs, where the goal is not to perform fewer operations in each run, but rather to obtain improved lower bounds on the success probability of the order being returned.

\subsection{Improvements for odd orders}
Several improvements to Shor's original classical post-processing approach have been proposed, including in particular ways of recovering factors of $N$ from odd orders~\cite{martin-lopez, lawson, grosshans, johnston}.
Grosshans et al.~\cite{grosshans} point out that if a small prime factor $q$ divides $r$, then $\gcd((g^{r/q} - 1) \text{ mod } N, N)$ is likely to yield non-trivial factors of $N$.
Johnston~\cite{johnston} later made similar observations.

In the context of Shor's algorithm, the observation that odd~$r$ may yield non-trivial factors of~$N$ seems to first have been made by Martín-López et al.~\cite{martin-lopez} in an actual experimental implementation.
This is reported in a work by Lawson~\cite{lawson} and later by Grosshans et al.~\cite{grosshans}.

We may efficiently find all small and moderate prime factors of $r$.
This often gives us several attempts at recovering non-trivial factors of $N$, leading to an increase in the probability of factoring $N$.
Furthermore, we can try all combinations of these prime factors, with multiplicity when applicable, to increase the number of non-trivial divisors.

\subsection{Improvements for special form integers}
\label{section:special-form}
Ekerå and Håstad~\cite{ekera-hastad, ekera-pp} have introduced a specialized quantum algorithm for factoring RSA integers that is more efficient than Shor's general factoring algorithm.
The problem of factoring RSA integers merits special consideration because it underpins the security of the widely deployed RSA crypto\-system~\cite{rsa}.

The algorithm of Ekerå and Håstad classically reduces the RSA integer factoring problem to a short discrete logarithm problem in a cyclic group of unknown order, using ideas from~\cite{hss93}, and solves this problem quantumly.
It is more efficient primarily because the quantum step is less costly compared to traditional quantum order finding, both when not making tradeoffs and comparing to Shor, and when making tradeoffs and comparing to Seifert.
It furthermore allows for the two factors of the RSA integer to be recovered deterministically, once the short discrete logarithm has been computed.
This implies that there is little point in optimizing the post-processing in Shor's original algorithm if the goal is to factor RSA integers.

On the topic of factoring special form integers, Grosshans et al.~\cite{grosshans} have shown how so-called safe semi-primes may be factored deterministically after a single run of Shor's original order finding algorithm.
Xu et al.~\cite{xu} have presented similar ideas.
Leander~\cite{leander} has shown how the lower bound of $1/2$ in Shor's original analysis may be improved to $3/4$ for semi-primes.

\subsection{Other related works on factoring via order finding}
There is a considerable body of literature on factoring.
The specific problem of factoring via number theoretical oracles has been widely explored, in the scope of various contexts.
Many of the results have a lineage that can be traced back to the seminal works of Miller~\cite{miller}.

More recently, Morain et al.~\cite{morain} have investigated \emph{deterministic} algorithms for factoring via oracles that yield $\phi(N)$, $\lambda(N)$ or the order $r$ of an element $g \in \mathbb Z_N^*$.
They find that given $\phi(N)$, it is possible to factor $N$ unconditionally and deterministically in polynomial time, provided that certain conditions on the prime factors of $N$ are met: It is required that $N$ be square-free and that $N$ has a prime factor $p > \sqrt{N}$.
Their approach leverages the Lenstra-Lenstra-Lovász (LLL)~\cite{lll} lattice basis reduction algorithm.

Morain et al. furthermore explicitly note that their work is connected to Shor's factoring algorithm, and that efficient \emph{randomized} factoring algorithms are produced by all three oracles (see sections 2.3 and 2.5 in~\cite{morain}).
They recall the method of Miller~\cite{miller} for factoring via an oracle that yields the order $r$ of $g \in \mathbb Z_N^*$, and its use in Shor's algorithm, and the fact that it may be necessary to consider multiple $g$ to find one with even order suitable for factoring $N$.
This implies that multiple oracle calls may be required to find non-trivial factors.

The authors furthermore state that if one has access to an oracle that yields e.g.~$\phi(N)$ or $\lambda(N)$, it is possible to do better:
It is then possible to find a $g \not\equiv \pm 1 \:\: (\text{mod } N)$ such that $g^2 \equiv 1 \:\: (\text{mod } N)$.
In particular, one may use that the order of $g$ must divide $\phi(N) = 2^t o$ for some $t > 0$ and odd $o$ to efficiently find such $g$.
This is closely related to the observations made in this paper.
The original algorithm is from Miller~\cite{miller}.

\subsection{On the relation to our contribution}
Given the abundance of literature on the topic of factoring, it is admittedly hard to make new original contributions, or even to survey the existing literature in its entirety.

We are however not aware of anyone previously demonstrating, within the context of Shor's algorithm, that a single call to the order finding algorithm is in general sufficient to completely factor any composite integer with high probability.
On the contrary, it is sometimes said that Shor's original post-processing algorithm should be used, potentially requiring several runs to find even a non-trivial factor, let alone the complete factorization.

\section{Our contribution}
We give an efficient classical probabilistic polynomial time algorithm, that is essentially due to Miller~\cite{miller}, for completely factoring $N$ given the order $r$ of a single element $g$ selected uniformly at random from $\mathbb Z_N^*$.
We furthermore analyze the runtime and success probability of the algorithm: In particular, we give a lower bound on its success probability.

\subsection{Notes on our original intuition for this work}
Given the order $r$ of $g$, we can in general correctly guess the orders of a large fraction of the other elements of $\mathbb Z_N^*$ with high probability.
To see why this is, note that $g$ is likely to have an order such that $\lambda(N) / r$ is a moderate size product of small prime factors.
Hence, by multiplying on or dividing off small prime factors to $r$, we can guess $\lambda(N)$, and by extension the orders of other elements in the group.

The above observation served as our original intuition for pursuing this line of work.
In this paper, we do however take matters a few steps further:
In particular, instead of guessing the orders of individual elements in $\mathbb Z_N^*$, we instead guess some positive multiple of $\lambda'(N)$.
Furthermore, we show that even if we only manage to guess some positive multiple of a divisor of $\lambda'(N)$, we are still often successful in recovering the complete factorization of $N$.

\subsection{The algorithm}
\label{section:algorithm}
In what follows, we describe a classical algorithm, essentially due to Miller~\cite{miller} (see the algo\-rithm in Lemma 5 ERH), for completely factoring $N$ given a positive multiple of $\lambda'(N)$.

We have slightly modified it, however, by adding a step in which we attempt to guess such a multiple, denoted~$r'$ below, given the order~$r$ of~$g$.
Furthermore, we select~$k$ group elements~$x_j$ uniformly at random from~$\mathbb Z_N^*$, for $k \ge 1$ some small parameter that may be freely selected, whereas Miller iterates over all elements up to some bound.

With these modifications, we shall prove that the resulting probabilistic algorithm runs in polynomial time, with the possible exception of the call to an order finding algorithm in the first step, and analyze its success probability.
To be specific, the resulting algorithm first executes the below procedure once to find non-trivial factors of $N$:

\begin{pseudocode}
\item Select $g$ uniformly at random from $\mathbb Z_N^*$. \label{alg:step:select-g}

Compute the order $r$ of $g$ via an order finding algorithm.

\item Let $\mathcal P(B)$ be the set of primes $\le B$. \label{alg:step:grow-r-to-rp}

Let $\eta (q, B)$ be the largest integer such that $q^{\eta(q, B)} \le B$.

Let $m' = cm$ for some constant $c \ge 1$ that may be freely selected.

Compute $r' = r \prod_{q \, \in \, \mathcal P(m')} q^{\eta(q, m')}$. \label{alg:step:guess}

\item Let $r' = 2^t o$ where $o$ is odd.

\item For $j = 1, \, 2, \, \ldots, \, k$ for some $k \ge 1$ that may be freely selected do:
\label{algorithm-step:for-loop}
\begin{pseudocode}
\item Select $x_j$ uniformly at random from $\mathbb Z_N^*$. \label{alg:step:select-xj}

\item For $i = 0, \, 1, \, \ldots, \, t$ do:
\begin{pseudocode}
\item Compute $d_{i, j} = \gcd(x_j^{2^i o} - 1, N)$. \label{alg:step:take-gcd}

If $1 < d_{i, j} < N$ report $d_{i, j}$ as a non-trivial factor of $N$.
\end{pseudocode}
\end{pseudocode}
\end{pseudocode}
We then obtain the complete factorization from the $d_{i, j}$ reported as follows:

A set is initialized and $N$ added to it before executing the above algorithm.
For each non-trivial factor reported, the factor is added to the set.
The set is kept reduced, so that it contains only non-trivial pairwise coprime factors.
It is furthermore checked for each factor in the set, if it is a perfect power $q^e$, in which case $q$ is reported as a non-trivial factor.
The algorithm succeeds if the set contains all distinct prime factors of $N$ when the algorithm stops.

Recall from the introduction that there are efficient methods for reducing $q^e$ to $q$, and methods for testing primality in probabilistic polynomial time.

\subsubsection{Notes on efficient implementation}
Note that the algorithm as described in section~\ref{section:algorithm} is not optimized:
Rather it is presented for ease of comprehension and analysis.
In an actual implementation, it would for example be beneficial to perform arithmetic modulo $N'$ throughout step~\ref{algorithm-step:for-loop} of the algorithm, for $N'$ a composite divisor of $N$ that is void of prime factors that have already been found.

The algorithm would of course also stop incrementing $j$ as soon as the factorization is complete, rather than after $k$ iterations, and stop incrementing $i$ as soon as $x_j^{2^i o} \equiv 1 \:\: (\text{mod } N')$ rather than continue up to $t$.
It would select $x_j$ and $g$ from $\mathbb Z_N^* \backslash \{ 1 \}$ in steps~\ref{alg:step:select-g} and~\ref{alg:step:select-xj} rather than from $\mathbb Z_N^*$.
In step~\ref{alg:step:take-gcd}, it would compute $d_{i,j} = \gcd(u_{i,j} - 1, N')$, where $u_{0,j} = x_j^o \text{ mod } N'$, and $u_{i,j} = u_{i-1,j}^2 \text{ mod } N'$ for $i \in [1, t]$, to avoid raising $x_j$ to $o$ repeatedly.

\vspace{1.5mm}

\emph{Ideas for potential optimizations:}
To further speed up the exponentiations, instead of raising each $x_j$ to a pre-computed guess $r'$ for a multiple of $\lambda'(N)$, a smaller exponent that is a multiple of the order of~$x_j \text{ mod } N'$ may conceivably be speculatively computed and used in place of $r'$.
To obtain more non-trivial factors from each $x_j$, combinations of small divisors of the exponent may conceivably be exhausted; not only the powers of two that divide the exponent.

\vspace{1.5mm}

\emph{Missing factors:} 
If an $x_j$ is such that $w_j = x_j^{N' r'} \not\equiv 1 \:\: (\text{mod } N')$, a factor~$q$ equal to the order of $w_j \text{ mod } N'$ is missing in the guess $r'$ for a multiple of $\lambda'(N')$.
Should this lead the algorithm to fail to completely factor $N'$, it may be worthwhile to attempt to compute the missing factor:
The options available include searching for $q$ by exponentiating to all primes up to some bound, which is essentially analogous to increasing $c$, or using some form of cycle finding algorithm that does not require a multiple of $q$ to be known in advance.  

\vspace{1.5mm}

In short, there are a number of optimizations that may be applied, but doing so above would obscure the workings of the algorithm, and the analysis that we are about to present. It is furthermore not necessary, since the algorithm as described is already very efficient.

\subsubsection{Notes on performing order finding for a multiple of $N$}
\label{section:multiple-N}
Note that the order finding in step~\ref{alg:step:select-g} may be performed for a multiple of~$N$ if desired.
This can only cause~$r$ to grow by some multiple, which in turn can only serve to increase the success probability, in the same way that growing~$r$ to~$r'$ in step~\ref{alg:step:grow-r-to-rp} serves to increase the success probability, see section~\ref{section:analysis}.
In turn, this explains why we can replace $N$ with $N'$ as described in the previous section, and why a restriction to odd $N$ does not imply a loss of generality.

\subsubsection{Notes on analogies with Miller's and Rabin's works}
Miller's original version of the algorithm in section~\ref{section:algorithm} is deterministic, and proven to work only assuming the validity of the extended Riemann hypothesis (ERH), as is Miller's primality test in the same thesis~\cite{miller}.
This may be because the notion of probabilistic polynomial time algorithms was not universally recognized when Miller did his thesis work.

Rabin~\cite{rabin} later converted Miller's primality test algorithm into a probabilistic polynomial time algorithm that is highly efficient in practice.

It is perhaps interesting to note that we perform essentially the same conversion in section~\ref{section:algorithm} with respect to Miller's factoring algorithm:
We convert it into an efficient probabilistic polynomial time factoring algorithm, that recovers the complete factorization of $N$ given the order $r$ of $g$ selected uniformly at random from $\mathbb Z_N^*$.

\subsubsection{Notes on analogies with Pollard's works}
Miller's algorithm may be regarded as a generalization of Pollard's $p-1$ algorithm~\cite{pollard-p-minus-one}: Miller essentially runs Pollard's algorithm for all prime factors $p_i$ in parallel by using that a multiple of $\lambda'(N) = \lcm(p_1 - 1, \ldots, p_n - 1)$ is known. Pollard, assuming no prior knowledge, uses a product of small prime powers up to some smoothness bound $B$ in place of $\lambda'(N)$. This factors out $p_i$ from $N$ if $p_i-1$ is $B$-smooth, giving Pollard's algorithm its name.

Since we only know $r$, a multiple of some divisor of $\lambda'(N)$, we grow $r$ to $r'$ by multiplying on a product of small prime powers.
This is in analogy with Pollard's approach.

\subsection{Analysis of the algorithm}
\label{section:analysis}
A key difference between our modified algorithm in section~\ref{section:algorithm}, and the original algorithm in Miller's thesis~\cite{miller}, is that we compute the order of a random $g$ and then add a guessing step:

We guess an $r'$ in step~\ref{alg:step:guess} that we hope will be a multiple of $p_i - 1$ for all $i \in [1, n]$, and if not all, then at least for all but one index on this interval, in which case the algorithm will still be successful in completely factoring $N$.
This is shown in the below analysis.
Specifically, we lower-bound the success probability and demonstrate the polynomial runtime of the algorithm.

\label{section:success-probability}
\begin{definition}
The prime $p_i$ is unlucky if $r'$ is not a multiple of $p_i - 1$.
\end{definition}

\begin{lemma}
\label{lemma:probability-pi-unlucky}
The probability that $p_i$ is unlucky is at most $\log p_i / (m' \log m')$.
\end{lemma}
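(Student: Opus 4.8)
The plan is to fix the index $i$, use the Chinese remainder theorem to reduce to the cyclic factor of $\mathbb Z_N^*$ belonging to $p_i^{e_i}$, and then bound the failure probability one prime at a time before taking a union bound.

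First I would record the arithmetic meaning of step~\ref{alg:step:guess}. Since $\eta(q, m')$ is the largest exponent with $q^{\eta(q,m')} \le m'$, the product $\prod_{q \, \in \, \mathcal P(m')} q^{\eta(q,m')}$ is precisely $\lcm(1, 2, \ldots, m')$, so that $v_q(r') = v_q(r) + \eta(q, m')$ for every prime $q$ (with $\eta(q,m') = 0$ when $q > m'$), where $v_q$ denotes $q$-adic valuation. Writing $p_i - 1 = \prod_q q^{a_q}$, divisibility $(p_i - 1) \mid r'$ holds iff $v_q(r) + \eta(q,m') \ge a_q$ for all $q$. For any $q$ with $q^{a_q} \le m'$ this is automatic, because then $a_q \le \eta(q,m')$; hence $p_i$ can be unlucky only on account of the \emph{problematic} primes $q$ with $q^{a_q} > m'$. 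Equivalently $q$ is problematic iff $c_q := a_q - \eta(q,m') \ge 1$, and unluckiness requires $v_q(r) < c_q$, i.e. $q^{c_q} \nmid r$, for at least one problematic $q$.

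Next I would bound, for a fixed problematic $q$, the probability that $q^{c_q} \nmid r$. Under the isomorphism $\mathbb Z_N^* \cong \prod_j \mathbb Z_{p_j^{e_j}}^*$ the chosen $g$ projects to a uniformly random element $g_i$ of the cyclic group $\mathbb Z_{p_i^{e_i}}^*$ of order $\phi(p_i^{e_i}) = p_i^{e_i - 1}(p_i - 1)$; let $r_i$ be its order. Since $r$ is the least common multiple of the component orders, $v_q(r) \ge v_q(r_i)$, so $q^{c_q} \nmid r$ forces $q^{c_q} \nmid r_i$, and it suffices to bound the latter. Because $q \mid p_i - 1$ we have $q \ne p_i$, whence $v_q(\phi(p_i^{e_i})) = v_q(p_i - 1) = a_q$. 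A direct count in a cyclic group of order $M$ with $v_q(M) = a_q$ shows that the fraction of elements whose order is not divisible by $q^{c}$ equals $q^{c - a_q - 1}$ for $1 \le c \le a_q$; taking $c = c_q = a_q - \eta(q,m')$ gives probability $q^{-\eta(q,m') - 1}$. By the defining maximality of $\eta$ we have $q^{\eta(q,m') + 1} > m'$, so each problematic prime contributes at most $1/m'$.

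Finally I would count the problematic primes and take a union bound. If there are $s$ of them then, as each satisfies $q^{a_q} > m'$ and their product divides $p_i - 1 < p_i$, we get $(m')^s < p_i$, i.e. $s < \log p_i / \log m'$. Combining this with the per-prime bound $1/m'$ yields that the probability that $p_i$ is unlucky is strictly less than $(\log p_i / \log m') \cdot (1/m') = \log p_i / (m' \log m')$, as claimed. I expect the main obstacle to be the second step: pinning down the per-prime probability, which hinges on identifying $v_q(\phi(p_i^{e_i}))$ with $a_q$ (using $q \ne p_i$) and on the reduction from $r$ to the single component order $r_i$ via $v_q(r) \ge v_q(r_i)$. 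The counting and the union bound are then routine.
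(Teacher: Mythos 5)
Your proof is correct and takes essentially the same route as the paper: a union bound over the primes $q$ whose full power $q^{a_q}$ dividing $p_i - 1$ exceeds $m'$ (at most $\log p_i / \log m'$ of them, since their product divides $p_i - 1$), with each contributing probability at most $1/m'$ by the cyclic structure of the relevant component of $\mathbb Z_N^*$. The paper phrases the per-prime event as ``$g$ is a $q^e$-power modulo $p_i$'' with probability at most $1/q^e \le 1/m'$; your valuation-based count of elements of $q$-deficient order in $\mathbb Z_{p_i^{e_i}}^*$, yielding $q^{-\eta(q,m')-1} \le 1/m'$, is simply a more explicit and careful rendering of that same estimate.
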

\begin{proof}
For $p_i$ to be unlucky, there has to exist a prime power $q^e$ such that 
\begin{romanlist}
\item $q^e > m'$, as $q^e$ otherwise divides $r'$,
\item $q^e$ divides $p_i - 1$, and
\item $g$ is a $q^e$-power modulo $p_i$, to reduce the order of $g$ by a factor $q^e$.
\end{romanlist}
The number of such prime powers $q^e$ that divide $p_i - 1$ is at most $\log p_i / \log m'$, as $q^e > m'$ and the product of the prime powers in question cannot exceed $p_i - 1$.
For each such prime power, the probability that $g$ is a $q^e$-power is at most $1 / q^{e} \le 1/m'$.
The lemma follows by taking the product of these two expressions.
\end{proof}

\begin{lemma}
\label{lemma:at-most-one-prime-unlucky}
If at most one $p_i$ is unlucky, then except with probability at most
\begin{align*}
2^{-k} \cdot \binom{n}{2}
\end{align*}
all $n$ prime factors of $N$ will be recovered by the algorithm after $k$ iterations.

\end{lemma}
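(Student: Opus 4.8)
The plan is to reduce the claim to a statement about separating pairs of primes, and then apply a union bound. First I would observe that it suffices to show that for every pair of distinct primes $p_a, p_b$ dividing $N$ the algorithm reports some factor $d_{i,j}$ divisible by exactly one of $p_a$ and $p_b$ (call such a pair \emph{separated}); such a $d_{i,j}$ is automatically non-trivial, being divisible by $p_a$ but not by $p_b \mid N$. Indeed, the reported factors together with $N$ are reduced to a pairwise coprime set, whose elements are then stripped of perfect powers. If every pair is separated, then no element of this coprime base can be divisible by two distinct primes (such primes would divide exactly the same reported factors, contradicting separation), so each base element is supported on a single prime and the perfect-power reduction returns that prime. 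Hence all $n$ primes are recovered once all $\binom{n}{2}$ pairs are separated, and it remains to bound the probability that some pair is never separated over the $k$ iterations.

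Next I would analyze a single iteration $j$ for a fixed pair $(p_a, p_b)$. Since at most one prime is unlucky, I may assume $p_a$ is lucky. By CRT, together with the surjectivity of $\mathbb Z_N^* \to (\mathbb Z/p_\ell)^*$, the reductions $x_j \text{ mod } p_a$ and $x_j \text{ mod } p_b$ are independent and uniform in their respective cyclic groups. For each prime $p_\ell$ set $S_\ell = \{ i \in [0, t] : x_j^{2^i o} \equiv 1 \:\: (\text{mod } p_\ell) \}$; since squaring fixes $1$, each $S_\ell$ is a suffix of $[0,t]$ (possibly empty), and the pair is separated at iteration $j$ precisely when $S_a \ne S_b$. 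Writing $p_\ell - 1 = 2^{c_\ell} o_\ell$ with $o_\ell$ odd, the lucky hypothesis $(p_a - 1) \mid r'$ forces $o_a \mid o$, so raising to the $o$ kills the odd part of the order of $x_j \text{ mod } p_a$; consequently $S_a$ is nonempty with minimal element $a_a := v_2(\mathrm{ord}(x_j \text{ mod } p_a))$. I would then record the distribution of the $2$-adic valuation of the order of a uniform element of a cyclic group of order $2^{c_a} o_a$: it equals $c_a$ with probability exactly $1/2$ and takes every other value with probability at most $1/2$, so $\max_v \Pr[a_a = v] = 1/2$.

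I would then bound the per-iteration failure probability. Separation fails exactly when $S_a = S_b$, which forces $S_b$ to be the nonempty suffix $[a_a, t]$; letting $s_b = \min S_b$ when $S_b \ne \emptyset$, independence of the two reductions gives
\begin{align*}
\Pr[S_a = S_b] \;=\; \sum_{v} \Pr[a_a = v]\,\Pr[s_b = v \text{ and } S_b \ne \emptyset] \;\le\; \tfrac12 \sum_{v} \Pr[s_b = v \text{ and } S_b \ne \emptyset] \;\le\; \tfrac12 .
\end{align*}
Thus each pair fails to be separated in one iteration with probability at most $1/2$, independently across the $k$ iterations, so a fixed pair is never separated with probability at most $2^{-k}$. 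A union bound over the $\binom{n}{2}$ pairs then yields the stated failure bound $2^{-k} \binom{n}{2}$.

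The probabilistic heart, and the main obstacle, is making the $1/2$ bound hold uniformly even for the pair containing the unlucky prime: there $S_b$ may be empty or a suffix not reachable by index $t$, so I cannot symmetrically speak of a "$2$-adic valuation" on the $p_b$ side. The device that overcomes this is to keep the entire estimate on the lucky side, bounding by $\max_v \Pr[a_a = v] = 1/2$ and absorbing all uncertainty about $p_b$ into the factor $\Pr[s_b = v \text{ and } S_b \ne \emptyset] \le 1$; this is exactly why it suffices that every pair contain at least one lucky prime. A secondary point requiring care, though standard, is the coprime-base reduction of the first paragraph: I would justify that pairwise separation forces each base element to be a prime power, so that the stated perfect-power and primality routines indeed recover $\{ p_1, \ldots, p_n \}$.
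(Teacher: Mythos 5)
Your proof is correct, and its outer skeleton coincides with the paper's: reduce the lemma to separating every pair of distinct prime factors, show each pair fails to separate in a single iteration with probability at most $1/2$, use the independence of the $x_j$ across iterations to get $2^{-k}$ per pair, and union bound over the $\binom{n}{2}$ pairs. Where you genuinely diverge is in how the per-iteration $1/2$ is established. The paper argues by cases: for a lucky--unlucky pair it works only at the top exponent $2^t o$, bounding by $1/2$ the probability that the unlucky prime also satisfies $x_j^{r'} \equiv 1$; for a lucky--lucky pair it compares orders at the exponent $2^{t_1 - 1} o$, with a further sub-case split on whether the two-adic valuations $t_1, t_2$ of $q_1 - 1, q_2 - 1$ are equal. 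Your suffix-set formulation unifies all of this: failure means $S_a = S_b$, which --- both sets being suffixes of $[0,t]$, with $S_a$ nonempty precisely because $p_a$ is lucky --- forces $\min S_b = a_a = v_2(\mathrm{ord}(x_j \bmod p_a))$, and CRT-independence together with the pointwise bound $\max_v \Pr[a_a = v] = 1/2$ absorbs every possible behaviour of the second prime (unlucky, $S_b$ empty, mismatched valuations) into a factor at most $1$. This buys a cleaner and slightly more general argument that never needs to know whether $p_b$ is lucky or what $v_2(p_b - 1)$ is, at the cost of being less explicit about the index at which the split occurs. Two small remarks: your pointwise bound uses $c_a \ge 1$, i.e.~that $p_a$ is odd --- this is exactly where the paper's standing assumption that $N$ is odd enters, and you should invoke it explicitly at that point, as the paper does in both of its cases; and your coprime-base argument in the first paragraph (pairwise separation forces each reduced factor to be a prime power, which the perfect-power routine then resolves) carefully fills in a step the paper dispatches in one sentence, which is a welcome addition rather than a deviation.
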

\begin{proof}
For us not to find all prime factors,
there must exist two distinct prime factors $q_{1}$ and $q_{2}$ that both divide $N$, such that for all combinations of $i \in [0, t]$ and $j \in [1, k]$, either both factors divide $x_j^{2^i o}- 1$, or none of them divide $x_j^{2^i o}- 1$.

To see why this is, note that the two factors will otherwise be split apart for some combination of $i$ and $j$ in step~\ref{alg:step:take-gcd} of the algorithm in section~\ref{section:algorithm}, and if this occurs pairwise for all factors, the algorithm will recover all factors.

There are $\binom{n}{2}$ ways to select two distinct primes from the $n$ distinct primes that divide~$N$.
For each such pair, at most one of $q_1$ and $q_2$ is unlucky, by the formulation of the lemma.
\begin{romanlist}[parsep=2pt]
\item If either $q_1$ or $q_2$ is unlucky:

Without loss of generality, say that $q_1$ is lucky and $q_2$ is unlucky.

\begin{itemize}[label=--, topsep=3pt, parsep=1.5pt]
\item The lucky prime $q_1$ then divides $x_j^{2^t o} - 1$.
To see why this is, recall that $x_j \in \mathbb Z_N^*$ and that $q_1 - 1$ divides $r'$ since $q_1$ is lucky, so
\begin{align*}
x_j^{2^t o} = x_j^{r'} \equiv 1 \quad (\text{mod } q_1).
\end{align*}

\item The unlucky prime $q_2$ divides $x_j^{2^t o} - 1$ iff $x_j^{2^t o} \equiv 1 \:\: (\text{mod } q_2)$.

For $x_j$ selected uniformly at random from $\mathbb Z_N^*$, and odd $q_2$, where we recall that we assumed $N$ and hence $q_2$ to be odd in the introduction, this event occurs with probability at most $1/2$.

To see why this is, note that since $q_2$ is unlucky, only an element $x_j$ with an order modulo $q_2$ that is reduced from the maximum order $q_2 - 1$ by some factor dividing $q_2 - 1$ can fulfill the condition.
The reduction factor must be at least two.
It follows that at most $1/2$ of the elements in $\mathbb Z_N^*$ can fulfill the condition.
\end{itemize}

For each iteration $j \in [1, k]$, the failure probability is hence at most $1/2$.

Since there are $k$ iterations the total failure probability is at most $2^{-k}$.

\item If both $q_1$ and $q_2$ are lucky:

In this case, both $q_1$ and $q_2$ divide $x_j^{2^t o} - 1$, since $x_j \in \mathbb Z_N^*$, and since $r' = 2^t o$ where $q_1 - 1$ and $q_2 - 1$ both divide $r'$, so 
\begin{align*}
x_j^{2^t o} = x_j^{r'} \equiv 1 \quad (\text{mod } q) \quad \text{ for } q \in \{q_1, q_2\}.
\end{align*}

The algorithm fails iff $x_j^{o}$ has the same order modulo both $q_1$ and $q_2$.

\vspace{1mm}

To see why this is, note that 
\begin{align*}
d_{i, j} = \gcd(x_j^{2^i o} - 1, N)
\end{align*}
is computed in step~\ref{alg:step:take-gcd} of the algorithm, for $i \in [0, t]$, and that the prime $q \in \{q_1, q_2\}$ divides $d_{i, j}$ iff $x_j^{2^i o} \equiv 1 \:\: (\text{mod } q)$.
It is only if this occurs for the same $i$ for both $q_1$ and $q_2$ that $q_1$ and $q_2$ will not split apart, i.e.~if $x_j^o$ has the same order modulo both $q_1$ and $q_2$.

To analyze the probability of $x_j^o$ having the same order modulo~$q_1$ and~$q_2$, we let~$2^{t_1}$ and~$2^{t_2}$ be the greatest powers of two to divide~$q_1 - 1$ and~$q_2 - 1$, respectively.
Recall furthermore that we assumed $N$, and hence~$q_1$ and~$q_2$, to be odd in the introduction.
This implies that we may assume that $t \ge t_1 \ge t_2 \ge 1$ without loss of generality.

Further\-more, we shall use that $x_j$ is selected uniformly at random from 
\begin{align*}
\mathbb Z^*_{N}
\simeq
\mathbb Z^*_{p_1^{e_1}}
\times 
\ldots
\times
\mathbb Z^*_{p_n^{e_n}},
\quad
\text{ where }
\quad
q_1, q_2 \in \{ p_1, \ldots, p_n \},
\end{align*}
which implies that $x_j \text{ mod } q_1$ and $x_j \text{ mod } q_2$ are selected independently and uniformly at random from $\mathbb Z_{q_1}^*$ and $\mathbb Z_{q_2}^*$.

Consider $x_j^{2^{t_1-1} o}$:

\begin{itemize}[label=--, topsep=3pt, parsep=1.5pt]
\item
If $t_1 = t_2$, the probability that $x_j^{2^{t_1-1} o} - 1$ is divisible by $q_1$ but not by $q_2$ is $1/4$, and vice versa for $q_2$ and $q_1$.
Hence, the probability is at most $1/2$ that $x_j^{o}$ has the same order modulo both $q_1$ and $q_2$.

\item
If $t_1 > t_2$, the probability that $x^{2^{t_1-1} o} - 1$ is divisible by $q_1$ is $1/2$, whereas the same always holds for $q_2$.
Hence, the probability is again at most $1/2$ that $x_j^{o}$ has the same order modulo $q_1$ and $q_2$.
\end{itemize}

For each iteration $j \in [1, k]$, the probability is hence again at most $1/2$.

Since there are $k$ iterations the total failure probability is $2^{-k}$.
\end{romanlist}
The lemma follows from the above argument, as there are $\binom{n}{2}$ combinations with probability at most $2^{-k}$ each.
\end{proof}

By definition $q$ is said to divide $u$ iff $u \equiv 0 \:\: (\text{mod } q)$.
Note that this implies that all $q \neq 0$ divide $u = 0$.
This situation arises in the above proof of Lemma~\ref{lemma:at-most-one-prime-unlucky}.

\begin{lemma}
\label{lemma:at-least-two-primes-unlucky}
At least two primes are unlucky with probability at most
\begin{align*}
\frac{1}{2c^{2} \log^{2} cm}.
\end{align*}
\end{lemma}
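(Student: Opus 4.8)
The plan is to bound the event ``at least two primes are unlucky'' by a union bound over the $\binom{n}{2}$ unordered pairs of distinct prime factors of $N$, and then to exploit independence of the per-prime unlucky events together with the estimate already supplied by Lemma~\ref{lemma:probability-pi-unlucky}. Observe first that ``at least two primes are unlucky'' is precisely the union, over pairs $i < j$, of the events ``$p_i$ and $p_j$ are both unlucky,'' since at least two unlucky primes force some pair to be jointly unlucky and conversely. Hence I would write
\begin{align*}
\Pr[\text{at least two unlucky}]
\le
\sum_{i \, < \, j} \Pr[p_i \text{ and } p_j \text{ both unlucky}].
\end{align*}
The remaining work is to evaluate each summand and to sum the resulting bound over all pairs.

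To factor each joint probability into a product of marginals, I would invoke independence of the unlucky events across distinct primes. This follows from the Chinese remainder decomposition
\begin{align*}
\mathbb Z_N^* \simeq \mathbb Z_{p_1^{e_1}}^* \times \ldots \times \mathbb Z_{p_n^{e_n}},
\end{align*}
under which the components $g \bmod p_i^{e_i}$ are selected independently and uniformly at random. The condition for $p_i$ to be unlucky, as characterized in the proof of Lemma~\ref{lemma:probability-pi-unlucky}, concerns only whether $g$ is a $q^e$-power modulo $p_i$ for some prime power $q^e > m'$ dividing $p_i - 1$; this depends solely on $g \bmod p_i$, and hence only on the $i$-th component. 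The unlucky events are therefore mutually independent, so by Lemma~\ref{lemma:probability-pi-unlucky} each pair contributes at most
\begin{align*}
\frac{\log p_i}{m' \log m'} \cdot \frac{\log p_j}{m' \log m'}.
\end{align*}

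It then remains to carry out the summation. I would factor out $1/(m'^2 \log^2 m')$ and bound the combinatorial sum using
\begin{align*}
\sum_{i \, < \, j} \log p_i \log p_j
\le \tfrac{1}{2} \Bigl( \sum_i \log p_i \Bigr)^2
\le \tfrac{1}{2} (\log N)^2
< \tfrac{1}{2} m^2,
\end{align*}
where the middle inequality uses that $\prod_i p_i$ divides $N$, so $\sum_i \log p_i = \log \prod_i p_i \le \log N$, and the last uses $\log N < m$ for an $m$-bit integer. Substituting $m' = cm$ and $\log m' = \log cm$ then collapses the expression to $1/(2 c^2 \log^2 cm)$, matching the claim exactly. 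The step requiring the most care is the independence argument of the second paragraph: one must verify that the unlucky condition genuinely depends on a single Chinese-remainder component, so that the marginals may be multiplied; the arithmetic estimate in the third paragraph is routine, and the fact that it lands precisely on the stated constant is a useful consistency check that the factor $\tfrac{1}{2}$ and the bound $\log N < m$ are the intended ones.
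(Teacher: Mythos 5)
Your proof is correct and takes essentially the same route as the paper: a union bound over the unordered pairs of distinct prime factors, independence of the per-prime conditions via the CRT decomposition, the per-pair bound from Lemma~\ref{lemma:probability-pi-unlucky}, and the estimate $\sum_{i<j} \log p_{i} \log p_{j} \le \tfrac{1}{2} \bigl( \sum_i \log p_i \bigr)^2 \le \tfrac{1}{2} m^2$ combined with $m' = cm$. If anything, your second paragraph is more explicit than the paper, which simply asserts that the unlucky events are independent, whereas you correctly trace this back to the fact that the necessary condition characterized in the proof of Lemma~\ref{lemma:probability-pi-unlucky} depends only on a single CRT component of $g$.
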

\begin{proof}
The events of various primes being unlucky are independent.
Hence, by Lemma~\ref{lemma:probability-pi-unlucky}, we have that the probability of at least two primes being unlucky is upper-bounded by 
\begin{align*}
\sum_{(i_1, i_2) \, \in \, \mathcal S} \frac{\log p_{i_1}}{m' \log m'} \cdot \frac{\log p_{i_2}}{m' \log m'}
\le
\frac{1}{2 (m' \log m')^2} \left(\, \sum_{i \, = \, 1}^n \log p_i \right)^2
\le
\frac{1}{2 c^{2} \log^{2}{cm}}
\end{align*}
where we used that $\sum_{i \, = \, 1}^n \log p_i \le \log N \le m$ and $m' = cm$, and where $\mathcal S$ is the set of all pairs $(i_1, i_2) \in [1,n]^2$ such that the product $p_{i_1} \cdot p_{i_2}$ is distinct, and so the lemma follows.
\end{proof}

\subsubsection{Runtime analysis}
\begin{claim}
\label{claim:size-r-prime}
It holds that $\log r' = \ordo(m)$.
\end{claim}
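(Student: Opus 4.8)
The plan is to split $r'$ multiplicatively and to bound the logarithm of each part separately. Writing
\begin{align*}
\log r' = \log r + \sum_{q \, \in \, \mathcal P(m')} \eta(q, m') \log q,
\end{align*}
it suffices to show that both terms on the right are $\ordo(m)$.

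For the first term I would use that $r$ is the order of an element of $\mathbb Z_N^*$, so $r$ divides $\lambda(N)$, which in turn divides $\phi(N) < N$. As $N$ is an $m$ bit integer we have $N < 2^m$, whence $r < 2^m$ and $\log r < m = \ordo(m)$.

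The second term is where the actual work lies. The factor $\prod_{q \, \in \, \mathcal P(m')} q^{\eta(q, m')}$ is precisely $\lcm(1, 2, \ldots, \lfloor m' \rfloor)$, since $\eta(q, m')$ is by definition the exponent of the highest power of $q$ not exceeding $m'$. Taking natural logarithms, the exponent sum equals the second Chebyshev function, namely
\begin{align*}
\ln \prod_{q \, \in \, \mathcal P(m')} q^{\eta(q, m')} = \sum_{q \, \in \, \mathcal P(m')} \eta(q, m') \ln q = \psi(m').
\end{align*}
By Chebyshev's estimates $\psi(x) = \ordo(x)$, so converting back to base two logarithms yields $\sum_{q \, \in \, \mathcal P(m')} \eta(q, m') \log q = \psi(m') / \ln 2 = \ordo(m') = \ordo(cm) = \ordo(m)$, using that $c$ is a constant. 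Combining the two bounds gives $\log r' = \ordo(m)$, as claimed.

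The one point requiring care is this second term. The crude bound that each summand satisfies $\eta(q, m') \log q \le \log m'$, summed over the $\pi(m')$ primes below $m'$, only yields $\ordo(m \log m)$, which is too weak. Obtaining the sharp $\ordo(m)$ bound genuinely requires a Chebyshev-type estimate, i.e.~$\psi(m') = \ordo(m')$, equivalently $\pi(m') = \ordo(m' / \log m')$, rather than the trivial $\pi(m') \le m'$. This is the main obstacle; everything else is routine.
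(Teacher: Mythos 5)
Your proof is correct and takes essentially the same route as the paper: the same decomposition $\log r' = \log r + \log(r'/r)$, the same bound $\log r < m$ from $r < N$, and the same Chebyshev-scale prime estimate for the appended product. The only cosmetic difference is that the paper bounds the product by $\pi(m') \cdot \log m' = \ordo(m'/\ln m') \cdot \log m' = \ordo(m')$ via the prime number theorem, whereas you identify the product as $\lcm(1, \ldots, \lfloor m' \rfloor)$ and invoke $\psi(m') = \ordo(m')$ directly --- the same estimate in a different guise, and your closing remark correctly pinpoints that this non-trivial bound (rather than the trivial $\pi(m') \le m'$) is the crux of the claim.
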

\begin{proof}
By the prime number theorem, there are $\ordo(m' / \ln m')$ primes less than $m'$.
As $r < N$ we have $\log r < m$.
Furthermore, as each prime power $q^e$ in $r' / r$ is less than $m'$, we have 
\begin{align*}
\log r' \le \log r + \ordo(m' / \ln m') \cdot \log m' = \ordo(m)
\end{align*}
as $m' = cm$ for some constant $c \ge 1$, and so the claim follows.
\end{proof}

\subsubsection{Main theorem}
\begin{theorem}
The factoring algorithm, with the possible exception of the single order finding call, completely factors $N$ in polynomial time, except with probability at most
\begin{align*}
2^{-k} \cdot \binom{n}{2} + \frac{1}{2 c^2 \log^2 cm}
\end{align*}
where $n$ is the number of distinct prime factors of $N$, $m$ is the bit length of $N$, $c \ge 1$ is a constant that may be freely selected, and $k$ is the number of iterations performed in the classical post-processing.
\end{theorem}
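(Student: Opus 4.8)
The plan is to read the theorem as a synthesis of the two failure-probability lemmas together with the runtime claim: Lemmas~\ref{lemma:at-most-one-prime-unlucky} and~\ref{lemma:at-least-two-primes-unlucky} already supply the two summands in the stated bound, and Claim~\ref{claim:size-r-prime} already controls the size of $r'$, so what remains is to combine them correctly via a union bound and to account for every step of the procedure in the runtime tally.

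First I would handle the success probability. The configuration of unlucky primes is determined entirely by the random draw of $g$ in step~\ref{alg:step:select-g}, whereas the splitting of prime pairs in the inner loop depends only on the independently drawn $x_1, \ldots, x_k$; this separation is what makes the following dichotomy legitimate. Let $F$ be the event that the algorithm fails to recover all $n$ prime factors after $k$ iterations, and let $U$ be the event that at least two primes are unlucky, with complement $\overline{U}$ that at most one prime is unlucky. Then $P(F) \le P(U) + P(F \cap \overline{U})$. Lemma~\ref{lemma:at-least-two-primes-unlucky} bounds the first term by $1/(2c^2\log^2 cm)$. For the second term I would invoke Lemma~\ref{lemma:at-most-one-prime-unlucky}, noting that its conclusion is conditional: for every configuration in which at most one prime is unlucky, the failure probability over the choices of the $x_j$ is at most $2^{-k}\binom{n}{2}$. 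Since this bound is uniform across all such configurations, averaging against their probabilities gives $P(F \cap \overline{U}) \le 2^{-k}\binom{n}{2}\, P(\overline{U}) \le 2^{-k}\binom{n}{2}$, and adding the two contributions produces the stated bound.

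Second I would verify the polynomial runtime. By Claim~\ref{claim:size-r-prime} we have $\log r' = \ordo(m)$, and since $2^t$ divides $r'$ this forces $t \le \log r' = \ordo(m)$; hence the double loop in step~\ref{algorithm-step:for-loop} performs $k(t+1) = \ordo(km)$ gcd computations. Using the efficient implementation described after the algorithm, for each $j$ one forms $u_{0,j} = x_j^{o} \bmod N$ by repeated squaring in $\ordo(\log o) = \ordo(m)$ modular multiplications and then obtains each subsequent $u_{i,j}$ by one squaring, so every inner step costs a single modular multiplication together with a gcd, both polynomial in $m$. Computing $r'$ in step~\ref{alg:step:guess} multiplies $r$ by the $\ordo(m'/\ln m')$ prime powers below $m' = cm$, producing an $\ordo(m)$-bit integer, which is again polynomial. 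Finally, the factorization-extraction post-processing---keeping the reported factors pairwise coprime through gcds, reducing perfect powers, and testing primality---is polynomial by the subroutines recalled in the introduction. Excepting the single order-finding call in step~\ref{alg:step:select-g}, the whole procedure therefore runs in time polynomial in $m$ and $k$.

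I do not anticipate a genuine obstacle, since the heavy lifting sits in the preceding lemmas; the one place demanding care is the conditioning in the union bound. Because Lemma~\ref{lemma:at-most-one-prime-unlucky} is phrased conditionally on the unlucky configuration, one must not simply add its bound to that of Lemma~\ref{lemma:at-least-two-primes-unlucky} without first observing that the $2^{-k}\binom{n}{2}$ estimate holds uniformly over all at-most-one-unlucky configurations, so that it survives the averaging against $P(\overline{U})$. The only other subtlety worth stating explicitly is that the analysis charges no failure probability to the perfect-power and primality subroutines, which is justified by using error-free (or negligibly erroneous) variants, so that the two displayed terms exhaust the sources of failure.
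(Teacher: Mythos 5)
Your proposal is correct and follows the same route as the paper's own proof: the paper likewise obtains the bound by summing the estimate from Lemma~\ref{lemma:at-most-one-prime-unlucky} (failure despite at most one unlucky prime) with that of Lemma~\ref{lemma:at-least-two-primes-unlucky} (at least two unlucky primes), and gets polynomial runtime from Claim~\ref{claim:size-r-prime} since all integers involved have length $\ordo(m)$. Your explicit treatment of the conditioning --- noting that the $2^{-k}\binom{n}{2}$ bound is uniform over all at-most-one-unlucky configurations so the union bound is legitimate --- is a careful spelling-out of a step the paper leaves implicit, not a different argument.
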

\begin{proof}
It is easy to see that the non-order finding part of the algorithm runs in polynomial time in $m$, as all integers are of length $\ordo(m)$, including in particular $r'$ by Claim~\ref{claim:size-r-prime}.
The theorem then follows from the analysis in section~\ref{section:success-probability}, by summing the upper bound on the probability of a failure occurring when at most one prime is unlucky in Lemma~\ref{lemma:at-most-one-prime-unlucky}, and on the probability of at least two primes being unlucky in Lemma~\ref{lemma:at-least-two-primes-unlucky}.
\end{proof}

By the above main theorem, the algorithm will be successful in completely factoring $N$,
if the constant~$c$ is selected so that~$1 / (2 c^{2} \log^{2} cm)$ is sufficiently small,
and if $2^{-k}$ for $k$ the number of iterations is sufficiently small in relation to~$\binom{n}{2}$ for~$n$ the number of distinct prime factors in~$N$.
Note that the latter requirement is easy to meet:
Pick $k = (2+\tau) \log n$ for some $\tau \ge 1$.
Then $2^{-k} \cdot \binom{n}{2} \le n^{-\tau}$ where $n \ge 2$ for composite $N$.

The time complexity of the algorithm is dominated by $k$ exponentiations of an integer modulo $N$ to an exponent of length $\ordo(m)$ bits.
This is indeed very efficient.
Note furthermore that our analysis of the success probability of the algorithm is a worst case analysis.
In practice, the actual success probability of the algorithm is higher.
Also, nothing in our arguments strictly requires~$c$ to be a constant:
We can make $c$ a function of $m$ to further increase the success probability at the expense of working with $\ordo(cm)$ bit exponents.

\section{Summary and conclusion}
When factoring an integer $N$ via order finding, as in Shor's factoring algorithm, computing the order of a single element selected uniformly at random from $\mathbb Z^*_N$ suffices to completely factor $N$, with very high probability, depending on how $c$ and $k$ are selected in relation to the number of factors $n$ and the bit length $m$ of $N$, for $N$ any integer.

\section*{Acknowledgments}
I am grateful to Johan Håstad for valuable comments and advice.
Funding and support for this work was provided by the Swedish NCSA that is a part of the Swedish Armed Forces.

I thank the participants of the Schloss Dagstuhl quantum cryptanalysis seminar, and in particular Daniel J.~Bernstein, for asking questions eventually leading me to consider more general factoring problems.

\appendix
\section{Supplementary simulations}
We have implemented the algorithm in Sage and tested it in practice.
Note that this is possible for any problem instance for which the factorization of $N = p_1^{e_1} \cdot \ldots \cdot p_n^{e_n}$ is known:

Order finding can be implemented exactly classically if the factorization of $p_i - 1$ is known for all $i \in [1, n]$.
If only the $p_i$ are known, order finding can be simulated heuristically classically:

The correct order of $g$ is then returned with very high probability.
If the correct order is not returned, some positive multiple of the correct order is returned, see section~\ref{section:simulate-order-finding}.

\subsection{Selecting problem instances}
To setup problem instances for our tests, we select 
$n$ distinct primes $\{ p_1, \, \ldots, \, p_n \}$ uniformly at random from the set of all odd $\ell$ bit primes,
exponents $\{ e_1, \, \ldots, \, e_n \}$ uniformly at random from the integers on $[1, e_{\max}]$, and 
compute the product $N = p_1^{e_1} \cdot \ldots \cdot p_n^{e_n}$.

\subsection{Simulating order finding}
\label{section:simulate-order-finding}
To simulate order finding heuristically, we use for computational efficiency that selecting $g$ uniformly at random from $\mathbb Z_N^*$ is equivalent to selecting $(g_1, \ldots, g_n)$ uniformly at random from
\begin{align*}
\mathbb Z^*_{p_1^{e_1}}
\times
\ldots
\times
\mathbb Z^*_{p_n^{e_n}}
\simeq
\mathbb Z_N^*.
\end{align*}

To approximate the order $r_i$ of each $g_i$ thus selected, we let $\phi(p_i^{e_i})$ be an initial guess for~$r_i$.
For all $f \in \mathcal P(B_s)$ for some bound $B_s$, we then let $r_i \leftarrow r_i / f$ for as long as $f$ divides $r_i$ and
\begin{align*}
g_i^{r_i/f} \equiv 1 \quad (\text{mod } p_i^{e_i}),
\end{align*}
where we recall that $\mathcal P(B_s)$ is the set of all primes $\le B_s$.

We then construct $g$ from $g_i$ and $p_i$, $e_i$ via the Chinese remainder theorem, by requiring that $g \equiv g_i \:\: (\text{mod } p_i^{e_i})$ for all $i \in [1, n]$, and take $r = \lcm(r_1, \, \ldots, \, r_n)$ as the approximate order~$r$ of~$g$.
This is a good approximation of $r$, in the sense that it is equal to $r$ with very high probability, provided that the bound $B_s$ is selected sufficiently large.

There is of course still a tiny risk that the approximation of $r$ will be incorrect, in which case it will be equal to some multiple $c_r \cdot r$, for $c_r > B_s$ a factor that divides $\phi(N)$.
This implies that the factoring algorithm will perform slightly better under simulated order finding than under exact order finding, as the order is never reduced by factors greater than~$B_s$.
The difference is however negligible for sufficiently large~$B_s$.

\subsection{Results}
We have executed tests, in accordance with the above, for all combinations of
\begin{align*}
\ell \in \{ 256, \, 512, \, 1024 \} 
\quad
n \in \{ 2, \, 5, \, 10, \, 25 \}
\quad
e_{\max} \in \{ 1, \, 2, \, 3 \}
\end{align*}
with $c = 1$, unbounded $k$, and $B_s = 10^6$ in the order finding simulator.
As expected, the algorithm recovered all factors efficiently from $r$ and $N$ in all cases considered.

For these choices of parameters, the runtime typically varies from seconds up to minutes when the Sage script is executed on a regular laptop computer.

\end{document}